\long\def\symbolfootnote[#1]#2{\begingroup
\def\thefootnote{\fnsymbol{footnote}}\footnote[#1]{#2}\endgroup}
\title{Multichannel Conflict-Avoiding Codes \\ of Weights Three and Four}
\long\def\symbolfootnote[#1]#2{\begingroup
\def\thefootnote{\fnsymbol{footnote}}\footnote[#1]{#2}\endgroup}
\author{Yuan-Hsun Lo$^\dagger$, Kenneth W. Shum$^\ddagger$, Wing Shing Wong$^\P$ and Yijin Zhang$^\S$ \\
$^\dagger$Department of Applied Mathematics, National Pingtung University \\
$^\ddagger$School of Science and Engineering, \\The Chinese University of Hong Kong, Shenzhen\\
$^\P$Department of Information Engineering, \\The Chinese University of Hong Kong\\
$^\S$School of Electronic and Optical Engineering, \\Nanjing University of Science and Technology
}
\author{
Yuan-Hsun~Lo,~\IEEEmembership{Member,~IEEE,} Kenneth~W.~Shum,~\IEEEmembership{Senior Member,~IEEE,} Wing~Shing~Wong,~\IEEEmembership{Life~Fellow,~IEEE,} and Yijin~Zhang,~\IEEEmembership{Senior~Member,~IEEE}\\
\ \\
\thanks{
This research is partially supported by the Ministry of Science and Technology, Taiwan under grant number 108-2115-M-153-004-MY2, University Development Fund --- Research Start-up Fund (UDF01000918) from The Chinese University of Hong Kong, Shenzhen, Fundamental Research Funds for the Central Universities of China, No.~30920021127, and the National Natural Science Foundation of China under Grant 62071236.

Yuan-Hsun Lo is with the Department of Applied Mathematics,
National Pingtung University, Pingtung City 90003, Taiwan (e-mail:
\mbox{yhlo@mail.nptu.edu.tw}). 

Kenneth W. Shum is with the School of Science and Engineering, The Chinese University of Hong Kong, Shenzhen 518100, China (email: \mbox{wkshum@cuhk.edu.cn}) (corresponding author).

Wing Shing Wong is with the Department of Information Engineering,
The Chinese University of Hong Kong, Hong Kong (e-mail:
\mbox{wswong@ie.cuhk.edu.hk}).

Yijin Zhang is with the School of Electronic and Optical Engineering,
Nanjing University of Science and Technology, Nanjing 210094, China (e-mail: \mbox{yijin.zhang@gmail.com}).
}}
\date{}
\newtheorem{theorem}{Theorem}
\newtheorem{definition}{Definition}
\begin{document}

\maketitle

{\bf Abstract:}
Conflict-avoiding codes (CACs) were introduced by Levenshtein as a single-channel transmission scheme for a multiple-access collision channel without feedback.
When the number of simultaneously active source nodes is less than or equal to the weight of a CAC, it is able to provide a hard guarantee that each active source node transmits at least one packet successfully within  a fixed time duration, no matter what the relative time offsets between the source nodes are.
In this paper, we extend CACs to multichannel CACs for providing such a hard guarantee over multiple orthogonal channels.
Upper bounds on the number of codewords for multichannel CACs of weights three and four are derived, and constructions that are optimal with respect to these bounds are presented.

\medskip

{\bf Keywords:} {\em Conflict-avoiding codes, two-dimensional optical orthogonal codes, frequency-hopping sequences, collision channel without feedback, grant-free transmissions.}

\section{Introduction}

A model for multiple access without feedback has been considered previously in~\cite{TL83, MasseyMathys85}, with a focus on the single-channel case.
We extend this model to the multichannel case here.
Consider a wireless network where multiple source nodes want to send their packets to a common sink node through $M \geq 1$ orthogonal channels.
All the channels admit the same time-slotted structure with the same time slot duration.
We assume that the system is {\em slot-synchronous} but not {\em frame-synchronous}, that is, the relative time offsets between the source nodes can be any arbitrary values that are integer multiples of a slot duration.
All packets last for a slot duration, and the slot-synchronous assumption allows them to be transmitted within the slot boundaries.
If two or more packets are transmitted in a time slot on the same channel, we assume that there is a collision and no information can be decoded from the transmitted packets, otherwise all packets are received without any error.
When a source node has some data to be sent, the source node becomes active and transmits packets according to a predefined scheme without feedback from the sink node.

If we adopt ALOHA as a transmission scheme in this model, each active source node is required to transmit a packet in a time slot on a channel with some probability, independent of the other active source nodes~\cite{Abramson73}.
This decentralized scheme has the advantage that it can be implemented without any centralized controller.
However, there is no hard guarantee that each active source node is able to transmit at least one packet successfully within a fixed time duration, due to the probabilistic nature of ALOHA and the lack of feedback.
This is undesirable in mission- and time-critical applications.

To provide such a hard guarantee for the single-channel case, i.e., $M=1$, conflict-avoiding codes (CACs) were introduced by Levenshtein in~\cite{Levenshtein05} as a deterministic transmission scheme.
A CAC is a collection of codewords represented by zero-one sequences, and each source node is assigned a unique codeword from this collection.
When a source node changes its status from inactive to active, it reads out an entry from the assigned codeword sequentially and periodically, and transmits a packet if and only if the entry is equal to~1.
The number of 1's in a codeword, called the {\em Hamming weight} (or simply the \emph{weight}) of the codeword. 
It is required that any pair of distinct codewords in a CAC have at most one overlapping ``1'' regardless of the relative time offsets.
As a result, if all codewords in a CAC have weight $w$ and length $L$, then we can ensure that each active source node can successfully send a packet within any consecutive $L$ slots, provided that the number of simultaneously active source nodes is less than or equal to $w$.
Obviously, the codeword length $L$ measures the worst-case delay that an active source node has to wait until it can send a packet successfully, and the weight $w$ is the maximal number of simultaneously active source nodes that can be supported.

In the literature of CACs, a common design goal is to maximize the total number of source nodes that can be supported, i.e., the cardinality of a set of codewords satisfying the aforementioned requirements, when $L$ and $w$ are given. CACs of weight 3 are studied in~\cite{Levenshtein07, Jimbo, Mishima09, FLM10,Momihara07a, HLS10, MZS14, WF13,MK17}, and CACs of weights four to seven are studied in \cite{LFL15,LMJ16,BT17a, BT17b}.
Results on CACs with general weights are presented in~\cite{Momihara07, SW10, SZW10}.
CACs also find applications in correcting limited-magnitude error in flash memory, and a CAC is usually called a splitter set in this context~\cite{Cassuto,Klove}.

Recently, to support ultra-reliable grant-free transmissions in 5G new radio, the problem formulation of CACs was extended to the multichannel case, i.e., $M>1$, by Chang {\em et~al.}~\cite{CLW19}, with the aim of further increasing the number of source nodes that can be supported.
A \emph{multichannel CAC} (MC-CAC) is a collection of two-dimensional (2-D) codewords represented by zero-one arrays, and any pair of distinct codewords have at most one overlapping ``1'' regardless of the relative time offsets.
As a result, an MC-CAC with $M\times L$ codewords and weight $w$ can support $w$ simultaneously active source nodes over $M$ channels, each of which has a successful transmission within every consecutive $L$ slots.
Similar to the design goal of CACs, this paper aims to maximize the total number of source nodes that can be supported in an MC-CAC, when $M$, $L$ and $w$ are given.

MC-CACs are connected to 2-D optical orthogonal codes (OOC), which find applications in optical code-division multiple-access (CDMA) systems (see e.g. \cite{Shivaleela05, Kwong05}). In 2-D OOC, optical pulses are spread in both time and frequency domains, so that Hamming auto-correlation and Hamming cross-correlation are both less than some pre-defined thresholds. An MC-CAC can be regarded as a 2-D OOC with the threshold for Hamming auto-correlation set to infinity. For MC-CAC with weight 3, related works can be found in \cite{Feng2017, Feng2019}, which investigate 2-D OOC with weight 3 and heterogeneous correlation requirements, namely,  Hamming cross-correlation $\lambda_c\leq 1$ and Hamming auto-correlation $\lambda_a\leq 2$. Another related class of 2-D array codes is optical orthogonal signature pattern codes (OOSPC)~\cite{Kitayama, YangKwong}. In contrast to 2-D OOC and MC-CAC, Hamming auto- and cross-correlations are computed with cyclic shifts on both the time and frequency axes. The nature of the problem is quite different from that of MC-CAC.

In addition, we discuss two practical issues in the design of MC-CACs below.
(i) In the system model considered by Chang {\em et~al.}~\cite{CLW19}, a source node \emph{cannot} transmit two packets in two different channels simultaneously, because it would require multiple wireless transmission units.
In this paper, we shall relax this requirement to consider that a source node may send multiple packets in all channels simultaneously.
An upper bound on the code size of MC-CACs under this setting is certainly an upper bound under the setting in~\cite{CLW19}.
(ii) This paper assumes that the system is slot-synchronous, but the relative time offsets in practice may be arbitrary real numbers~\cite{Zhang11}.
  We can remedy this setting by requiring each active source node to transmit only in the first half of a time slot.
  This strategy sacrifices half of the transmission time, but CACs and MC-CACs will then be applicable.

The remainder of this paper is organized as follows.
In Section~\ref{sec:definition}, the problem formulation and a combinatorial characterization of MC-CACs with general weights is given.
Upper bounds on the code sizes of MC-CACs of weights 3 and 4 are derived in Sections \ref{sec:bound3} and \ref{sec:bound4}, respectively.
Constructions that are optimal with respect to these bounds are presented in Section~\ref{sec:construction3}.
Finally, we close this paper with some concluding remarks.

\section{Formal Definition of Multichannel Conflict-Avoiding Codes}
\label{sec:definition}
The transmission pattern is represented by an $M\times L$ zero-one array, where $M$ denotes the number of channels and $L$ refers to the transmission pattern length.
The rows and columns of the array are further indexed by $\mathcal{M}:=\{0,1,2,\ldots, M-1\}$ and $\mathbb{Z}_L:=\{0,1,2,\ldots, L-1\}$, respectively.
The index set $\mathbb{Z}_L$ is considered as a cyclic group equipped with mod-$L$ addition.
In the case when the time offset is $\tau$, a packet is sent in the $i$-th channel at time slot $t\equiv j\oplus_L \tau$ if and only if the $(i,j)$-entry of the array is equal to 1.

\begin{definition}
\rm
Given positive integers $M$, $L$ and $w$, a set of zero-one $M\times L$ arrays $X_k$, for $k=1,2,\ldots, K$, is called an MC-CAC$(M,L,w)$ if

\ifdefined\SingleColumn
\renewcommand{\theenumi}{(\arabic{enumi})}
\else
\renewcommand{\theenumi}{(\arabic{enumi}}
\fi
\begin{enumerate}
\item
$$ \sum_{i=0}^{M-1} \sum_{j=0}^{L-1} X_k(i,j)=w$$
for all $k=1,2,\ldots, K$, and
\item
$$
  \sum_{i=0}^{M-1} \sum_{j=0}^{L-1} X_k(i,j) X_\ell(i,j \oplus_L \tau ) \leq 1
$$
for $k\neq \ell$ and for all $\tau =0,1,2,\ldots, L-1$, with addition $\oplus_L$ performed in $\mathbb{Z}_L$.
\end{enumerate}
Such an array is called a {\em codeword} in this MC-CAC.
\end{definition}


The first condition requires that the weight of each codeword is equal to $w$.
The second condition is a mathematical formulation of the cross-correlation requirement.
If each source node is assigned a distinct codeword from an MC-CAC$(M,L,w)$, there are at most one collision between two distinct source nodes in every consecutive $L$ time slots, regardless of the relative time offsets.

\medskip

\noindent {\bf Example 1.} The arrays in Fig.~\ref{fig:353} are the codewords of an MC-CAC$(3,5,3)$ of size~8.

\begin{figure*}[t]
\begin{gather*}
\begin{array}{|c|c|c|c|c|} \hline
1&1&1&0&0 \\ \hline
0&0&0&0&0 \\ \hline
0&0&0&0&0  \\ \hline
\end{array}\ \ \
\begin{array}{|c|c|c|c|c|} \hline
0&0&0&0&0 \\ \hline
1&1&1&0&0 \\ \hline
0&0&0&0&0 \\ \hline
\end{array}\ \ \
\begin{array}{|c|c|c|c|c|} \hline
0&0&0&0&0 \\ \hline
0&0&0&0&0 \\ \hline
1&1&1&0&0 \\ \hline
\end{array}\ \ \
\begin{array}{|c|c|c|c|c|} \hline
1&0&0&0&0 \\ \hline
1&0&0&0&0 \\ \hline
1&0&0&0&0 \\ \hline
\end{array}
 \\
\begin{array}{|c|c|c|c|c|} \hline
1&0&0&0&0 \\ \hline
0&1&0&0&0 \\ \hline
0&0&1&0&0  \\ \hline
\end{array}\ \ \
\begin{array}{|c|c|c|c|c|} \hline
1&0&0&0&0 \\ \hline
0&0&1&0&0 \\ \hline
0&0&0&0&1 \\ \hline
\end{array}\ \ \
\begin{array}{|c|c|c|c|c|} \hline
1&0&0&0&0 \\ \hline
0&0&0&1&0 \\ \hline
0&1&0&0&0 \\ \hline
\end{array}\ \ \
\begin{array}{|c|c|c|c|c|} \hline
1&0&0&0&0 \\ \hline
0&0&0&0&1 \\ \hline
0&0&0&1&0 \\ \hline
\end{array}
\end{gather*}
\caption{Codewords of an MC-CAC$(3,5,3)$.}
\label{fig:353}
\end{figure*}

\begin{definition}
\rm
Given the number of channels $M$, the codeword length $L$ and weight $w$, we let
$A(M, L, w)$
denote the largest cardinality among all MC-CAC$(M,L,w)$s, i.e.,
$$
A(M,L,w) := \max \{ |\mathcal{C}|:\, \mathcal{C} \text{ is an MC-CAC$(M,L,w)$} \}.
$$
\end{definition}

{\em Remark:} In the definition of MC-CAC$(M,L,w)$, the $w$ 1's in a codeword can be located anywhere in the $M\times L$ array. If a column of an array contains two or more 1's, then the source node to which the array is assigned needs to transmit two or more packets in a time slot. In order to support multiple simultaneous packet transmissions, a source node must be equipped with multiple transmitters, which is costly to implement in practice. If we impose an additional assumption that each of the column sums in every array is at most~1, then the problem formulation is the same as considered in~\cite{CLW19}.

We next give a combinatorial formulation of MC-CACs.

\begin{definition}
\rm
A {\em scheduling pattern} is a subset of the cartesian product $\mathcal{M}\times \mathbb{Z}_L$ of cardinality~$w$,
$$
S = \{(m_1,t_1), (m_2,t_2),(m_3,t_3), \ldots, (m_w, t_w)\},
$$
with $m_j \in \mathcal{M}$ and $t_j\in \mathbb{Z}_L$ for $j=1,2,\ldots, w$. Given a scheduling pattern $S$, and two indices $i_1$, $i_2\in \mathcal{M}$, define the $(i_1,i_2)$ {\em set of differences} by
$$
D_S(i_1,i_2) := \{t_1 - t_2 \bmod L:\, (i_1,t_1), (i_2,t_2) \in S \}
$$
when $i_1\neq i_2$, and
$$
D_S(i_1,i_2) := \{t_1 - t_2 \bmod L:\, (i_1,t_1), (i_2,t_2) \in S \} \setminus\{0\}
$$
when $i_1=i_2$.
\end{definition}

When $i_1=i_2=i$, we exclude the zero element of $\mathbb{Z}_L$ as an element of $D_S(i,i)$. We note that if $t_1-t_2$ is in $D_S(i_1,i_2)$, then  $t_2-t_1$ is in $D_S(i_2,i_1)$. In particular, when $i_1=i_2$, the set $D_S(i_1,i_2)$ is closed under multiplication by $-1$.

The pairs $(m_1,t_1),(m_2,t_2) \ldots, (m_w,t_w)$ in a scheduling pattern are associated to the positions of 1's in a codeword of an MC-CAC.
Because any cyclic shift of a codeword in the time component is regarded as the same codeword, two scheduling patterns
$$
\{(m_1,t_1), (m_2,t_2),(m_3,t_3), \ldots, (m_w, t_w)\}
$$
and
$$\{(m_1,t_1+\tau), (m_2,t_2+\tau),(m_3,t_3+\tau), \ldots, (m_w, t_w+\tau)\}
$$
are associated with the same codeword, and thus are said to be equivalent to each other. With this equivalence, there is a one-to-one correspondence between a scheduling pattern and a zero-one $M\times L$ array with weight~$w$.

In terms of scheduling pattern and set of differences, the defining requirements of an MC-CAC$(M,L,w)$ can be reformulated as follows.

\begin{theorem}
The $M\times L$ arrays corresponding to a set of $K$ scheduling patterns $\mathcal{S} = \{S_1,S_2,\ldots, S_K\}$ form an MC-CAC$(M,L,w)$ if and only if for any $i_1,i_2\in \mathcal{M}$, the $(i_1,i_2)$ sets of differences $D_{S_j}(i_1,i_2)$, $j=1,2,\ldots,K$, are mutually disjoint.
\end{theorem}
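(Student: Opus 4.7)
The plan is to translate both conditions into statements about pairs of $1$'s in the arrays, and to match those pairs up via a simple algebraic identity.

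First I will expand the cross-correlation
\[
C_{k\ell}(\tau):=\sum_{i=0}^{M-1}\sum_{j=0}^{L-1} X_k(i,j)\,X_\ell(i,j\oplus_L\tau)
\]
as the number of ordered pairs $\bigl((i,t),(i,t')\bigr)$ with $(i,t)\in S_k$, $(i,t')\in S_\ell$, and $t'-t\equiv\tau\pmod L$. In this language, condition~(2) of the MC-CAC definition reads: for every $k\neq\ell$ and every $\tau\in\mathbb{Z}_L$, at most one such pair exists. Thus the whole theorem reduces to showing that this ``at most one pair'' property holds for all $(k,\ell,\tau)$ if and only if the sets $D_{S_j}(i_1,i_2)$ are pairwise disjoint as $j$ varies, for every fixed pair $(i_1,i_2)\in\mathcal M\times\mathcal M$.

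For the ``if'' direction (contrapositive), I would assume $C_{k\ell}(\tau)\geq 2$ for some $k\neq\ell$ and some $\tau$, extract two distinct contributing pairs $\bigl((i_1,t_1),(i_1,t_1')\bigr)$ and $\bigl((i_2,t_2),(i_2,t_2')\bigr)$, and subtract the equalities $t_1'-t_1\equiv\tau\equiv t_2'-t_2\pmod L$ to obtain $t_1-t_2\equiv t_1'-t_2'\pmod L$; calling this common value $d$, I would verify that $d\in D_{S_k}(i_1,i_2)\cap D_{S_\ell}(i_1,i_2)$. For the ``only if'' direction, I would assume some $d\in D_{S_k}(i_1,i_2)\cap D_{S_\ell}(i_1,i_2)$ exists, pick witnesses $(i_1,t_1),(i_2,t_2)\in S_k$ and $(i_1,t_1'),(i_2,t_2')\in S_\ell$ with $t_1-t_2\equiv d\equiv t_1'-t_2'\pmod L$, and set $\tau:=t_1'-t_1$. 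Since $t_2'-t_2$ equals the same $\tau$, both pairs contribute to $C_{k\ell}(\tau)$, forcing $C_{k\ell}(\tau)\geq 2$.

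The only delicate point will be verifying that the two contributing pairs in $S_k\times S_\ell$ are genuinely distinct. When $i_1\neq i_2$ this is automatic from the first coordinates. When $i_1=i_2$, distinctness requires $t_1\neq t_2$, i.e.\ $d\neq 0$, which is exactly why the definition of $D_S(i,i)$ excludes the element $0$: the exclusion rules out the degenerate collision of a single $1$ with itself. Handling this boundary case symmetrically in both directions is the only nontrivial step; once it is dispatched, the biconditional of the theorem is immediate.
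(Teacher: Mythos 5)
Your proposal is correct. The paper actually states this theorem without any proof, treating it as an immediate reformulation of condition~(2) in the definition of an MC-CAC$(M,L,w)$; your double-counting argument --- identifying each unit contribution to $\sum_{i,j}X_k(i,j)X_\ell(i,j\oplus_L\tau)$ with a pair $(i,t)\in S_k$, $(i,t+\tau)\in S_\ell$, passing to the common difference $d=t_1-t_2\equiv t_1'-t_2'$, and checking that distinctness of the two contributing terms is exactly equivalent to $d\neq 0$ in the diagonal case $i_1=i_2$ (which is precisely why $0$ is excluded from $D_S(i,i)$) --- is the standard verification the authors left implicit, and it is complete. The only point worth adding for full rigor is the one-line observation that condition~(1) of the definition holds automatically because each scheduling pattern has cardinality $w$ by definition.
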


It is notationally convenient to represent the sets of differences by an array.

\begin{definition}
\label{def:array} \rm
Given a scheduling pattern $S$, we define an $M\times M$ array $D_S$, whose rows and columns are indexed by the channels indices in $\mathcal{M}$.
For $i_1, i_2\in \mathcal{M}$, the $(i_1,i_2)$ entry of $D_S$ is a subset of $\mathbb{Z}_L$ and is defined by $D_S(i_1,i_2)$. We will call $D_S$ the {\em array of differences} of the scheduling pattern~$S$.
We define the intersection $D_{S} \cap D_{S'}$ of two $M\times M$ arrays by entry-wise intersection.
\end{definition}

Using the convention in Definition~\ref{def:array}, a collection of $K$ scheduling patterns $\mathcal{S} = \{S_1, S_2,\ldots, S_K\} \subset 2^{\mathcal{M}\times \mathbb{Z}_L}$ corresponds to an MC-CAC$(M,L,w)$ if and only if
\begin{equation}|S|=w \quad  \text{ for all } S\in\mathcal{S},
\label{eq:constant_weight}
\end{equation}
and
\begin{equation}
D_{S} \cap D_{S'} = \emptyset \quad \text{for } S \neq S' \text{ in }\mathcal{S}.
\label{eq:disjoint}
\end{equation}
 (The symbol $\emptyset$ above stands for an $M\times M$ array whose entries are all equal to the empty set.)  Because the arrays of differences  must be entry-wise mutually disjoint, the design of MC-CAC can be regarded as a packing problem.

\medskip

\noindent {\bf Example 1 (cont'd).} The scheduling patterns associated with the arrays in Example 1 are
\begin{align*}
S_1 &= \{ (0,0), (0,1), (0,2) \}, \\
S_2 &= \{ (1,0), (1,1), (1,2) \}, \\
S_3 &= \{ (2,0), (2,1), (2,2) \}, \\
S_4 &= \{ (0,0), (1,0), (2,0) \}, \\
S_5 &= \{ (0,0), (1,1), (2,2) \}, \\
S_6 &= \{ (0,0), (1,2), (2,4) \}, \\
S_7 &= \{ (0,0), (1,3), (2,1) \}, \\
S_8 &= \{ (0,0), (1,4), (2,3) \}.
\end{align*}
Each of the above sets is considered as a subset of the cartesian product $\{0,1,2\}\times \mathbb{Z}_5$.
The associated arrays of differences are shown in Fig.~\ref{fig:difference_matrices}.

\begin{figure*}[t]
\begin{align*}
D_{s_1} &= \begin{array}{|c|c|c|} \hline
\{1,2,3,4\} & \phantom{\{1,2,3,4\} }&  \phantom{\{1,2,3,4\} }\\ \hline
&& \\ \hline
&& \\ \hline
\end{array}
\qquad \qquad
D_{s_4} = \begin{array}{|c|c|c|} \hline
&\{0\} &\{0\} \\ \hline
\{0\} &&\{0\} \\ \hline
\{0\} &\{0\}& \\ \hline
\end{array}
\qquad \qquad 
D_{s_7} = \begin{array}{|c|c|c|} \hline
&\{2\} &\{4\} \\ \hline
\{3\} &&\{2\} \\ \hline
\{1\} &\{3\}& \\ \hline
\end{array}
 \\ 
D_{s_2} &= \begin{array}{|c|c|c|} \hline
&& \\ \hline
 \phantom{\{1,2,3,4\} }&  \{1,2,3,4\} & \phantom{\{1,2,3,4\} }\\ \hline
&& \\ \hline
\end{array}
\qquad \qquad
D_{s_5} = \begin{array}{|c|c|c|} \hline
&\{4\} &\{3\} \\ \hline
\{1\} &&\{4\} \\ \hline
\{2\} &\{1\}& \\ \hline
\end{array}
\qquad \qquad
D_{s_8} = \begin{array}{|c|c|c|} \hline
&\{1\} &\{2\} \\ \hline
\{4\} &&\{1\} \\ \hline
\{3\} &\{4\}& \\ \hline
\end{array}
\\
D_{s_3} &= \begin{array}{|c|c|c|} \hline
&& \\ \hline
&& \\ \hline
 \phantom{\{1,2,3,4\} }&   \phantom{\{1,2,3,4\} } & \{1,2,3,4\} \\ \hline
\end{array}
\qquad \qquad 
D_{s_6} = \begin{array}{|c|c|c|} \hline
&\{3\} &\{1\} \\ \hline
\{2\} &&\{3\} \\ \hline
\{4\} &\{2\}& \\ \hline
\end{array}
\end{align*}
\caption{The arrays of differences of the codewords in Example 1. The empty space means that the corresponding entry is the empty set.}
\label{fig:difference_matrices}
\end{figure*}

\section{An Upper Bound on Code Size for Weight 3}
\label{sec:bound3}
In this section we consider MC-CAC$(M,L,3)$s with the number of channels $M\geq 3$.
Scheduling patterns consisting of three packets can be classified into three types:

\ifdefined\SingleColumn
\renewcommand{\theenumi}{(\roman{enumi})}
\else
\renewcommand{\theenumi}{(\roman{enumi}}
\fi

\begin{enumerate}
\item All three packets are sent in the same channel.
\item The three packets are transmitted in three distinct channels.
\item Two packets are transmitted in a channel, and the third one is transmitted in another channel.
\end{enumerate}

In the followings, we analyze the corresponding sets of differences
for each of the three types.

\smallskip

\noindent \underline{Type (i).} When all three packets are sent in the same channel, say channel $m$, the scheduling pattern $S$ has the form
$$
S = \{(m,t_1), (m,t_2), (m,t_3) \}
$$
where $t_1$, $t_2$, and $t_3$ are three distinct elements in $\mathbb{Z}_L$.  Without loss of generality, we may assume that $t_1=0$, and hence it is sufficient to consider scheduling pattern
$$
S = \{(m,0), (m,a), (m,b) \},
$$
where $a$ and $b$ are nonzero elements in $\mathbb{Z}_L$ and $a\neq b$.
The $(m,m)$-entry in the array of differences is
$$
D_S(m,m) = \{ \pm a, \pm b, \pm (b-a)\}.
$$
The set of differences $D_S(m,m)$ contains at most six distinct differences.
This is the only entry in the array of differences that is nonempty.

A scheduling pattern of type (i) is called {\em equi-difference} if $b=2a$, i.e., if $S$ can be written in the form
\begin{equation}
S = \{(m,0), (m,a), (m,2a) \}.
\label{eq:equi_difference1}
\end{equation}
The set of differences in the equi-difference case is equal to $\{\pm a, \pm 2a\}$. We remark that an equi-difference scheduling pattern can also be written as
\begin{equation}
S = \{(m,-a), (m,0), (m,a) \}.
\label{eq:equi_difference2}
\end{equation}
The scheduling patterns in \eqref{eq:equi_difference1} and \eqref{eq:equi_difference2} are equivalent and they are associated with the same codeword in an MC-CAC.
In the sequel we will use the notation in~\eqref{eq:equi_difference1} for equi-difference scheduling patterns.

We list the sizes of all possible sets of differences for codewords of the first type below.
\begin{align}
&|D_S(m,m)| = \notag \\
&\begin{cases}
2 & \text{if } 3|L \text{ and } S = \{(m,0), (m,L/3) , (m,2L/3)\}, \\
3 & \text{if } 4|L \text{ and } S = \{(m,0), (m,L/4) , (m,L/2)\}, \\
5 & \text{if } 2|L \text{ and } S = \{(m,0), (m,a), (m,L/2)\} \\
  & \qquad \ \text{ for some } a\neq L/4, \\
6 & \text{if $S$ is non-equi-difference and } (m,L/2)\not\in S, \\
4 & \text{otherwise}.
\end{cases} \label{eq:same_channel}
\end{align}
The sets of differences of the two scheduling patterns 
\begin{gather}
\{(m,0), (m,L/3) , (m,2L/3)\}, \label{eq:special1}\\
 \{(m,0), (m,L/4) , (m,L/2)\} \label{eq:special2}
\end{gather}
have size 2 and 3, respectively. An MC-CAC may contain the scheduling pattern in \eqref{eq:special1} only when $L$ is divisible by 3, and the the scheduling pattern in \eqref{eq:special2} only when $L$ is divisible by~4.

We summarize the sizes of sets of differences of a scheduling pattern 
$$S=\{(m,0), (m,a), (m,b)\}$$
 of type (i) as follows,
$$
|D_S(i_1,i_2)| =
 \begin{cases}0 & \text{if }i_1 \neq m \text{ or } i_2 \neq m ,\\
2,3,4,5, \text{or } 6 & \text{if } i_1=i_2 = m.
\end{cases}
$$

\underline{Type (ii).} A scheduling pattern of the second type is of the form
$$
S = \{ (m_1,t_1), (m_2,t_2), (m_3,t_3)\},
$$
where $m_1$, $m_2$ and $m_3$ are distinct channel indices. The sizes of the sets of differences are
$$|D_S(i_1,i_2)| = 
  \begin{cases}
1 & \text{if } i_1 \neq i_2 \text{ and }\{i_1,i_2\} \subset \{m_1,m_2,m_3\} ,\\
0 & \text{otherwise}.
\end{cases}
$$

\underline{Type (iii).} A scheduling pattern of the third type can be written as
$$
S = \{ (m_1,t_1), (m_1,t_2), (m_2, t_3)\},
$$
where $m_1\neq m_2$ and $t_1\neq  t_2$. If we want the set of difference to be nonempty, the only possibilities are $\{i_1,i_2\}=\{m_1,m_2\}$ or $i_1=i_2=m_1$.
The sizes of the sets of differences are
\begin{align}
&|D_S(i_1,i_2)| = \notag \\
&  \begin{cases}
0 & \text{if } \{i_1 , i_2\} \neq \{m_1\} \text{ and } \{i_1,i_2\} \neq \{ m_1, m_2\}, \\
1 & \text{if } \{i_1,i_2\}=\{m_1\} \text{ and } 2|L \text{ and } t_1 - t_2 \equiv L/2 \bmod L, \\ 	
2 & \text{otherwise.}
\end{cases} \label{eq:type_iii}
\end{align}

\medskip

By analyzing the three types of codewords, we can obtain an upper bound on the total number of codewords, which is stated in the following theorem.

\begin{theorem}
For $M \geq 3$, we have the following upper bound on the cardinality of an $(M,L,3)$ MC-CAC,
\begin{align*}
&A(M,L,3) \leq \\
&\begin{cases}
 \big\lfloor \frac{M}{12} (2ML + L + 6) \big\rfloor & \text{if }L = 0,6 \bmod 12, \\
 \big\lfloor \frac{M}{12} (2ML + L +3) \big\rfloor & \text{if }L = \pm 3 \bmod 12, \\
 \big\lfloor \frac{M}{12} (2ML + L ) \big\rfloor & \text{if }L = \pm2, \pm 4 \bmod 12, \\
 \big\lfloor \frac{M}{12} (2ML + L - 3) \big\rfloor & \text{if }L = \pm 1, \pm 5 \bmod 12. 
\end{cases}
\end{align*}
\label{thm:bound1}
\end{theorem}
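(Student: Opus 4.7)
The plan is a weighted double-counting argument. For each scheduling pattern $S$, I define the cost
\[
W(S) \;:=\; 3\sum_{m\in\mathcal{M}} |D_S(m,m)| \;+\; 2\sum_{m_1\neq m_2}|D_S(m_1,m_2)|,
\]
and evaluate $\sum_{k=1}^K W(S_k)$ in two ways. By the disjointness condition~(\ref{eq:disjoint}), together with $D_S(m,m)\subseteq \mathbb{Z}_L\setminus\{0\}$ and $D_S(m_1,m_2)\subseteq \mathbb{Z}_L$ for $m_1\neq m_2$, the aggregate upper bound is
\[
\sum_{k=1}^K W(S_k)\;\leq\;3M(L-1)+2M(M-1)L \;=\; M(2ML+L-3).
\]
The weights $3$ and $2$ are chosen so that the per-codeword lower bound $W(S)\geq 12$ holds generically, with exceptions only under arithmetic conditions on $L$ that drive the four cases of the theorem.

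I next tabulate $W(S)$ by type. Using formulas~(\ref{eq:same_channel})--(\ref{eq:type_iii}), a direct check shows that $W(S)\geq 12$ for every codeword except three deficient families: the type~(i) special $\{(m,0),(m,L/3),(m,2L/3)\}$ with $W=6$ (requires $3\mid L$); the type~(i) special $\{(m,0),(m,L/4),(m,L/2)\}$ with $W=9$ (requires $4\mid L$); and the type~(iii) codeword whose two same-channel packets differ by $L/2$, with $W=11$ (requires $2\mid L$). Let $n_{L/3},n_{L/4},n_{L/2}^{(iii)}$ be the corresponding counts. The element $L/2\in\mathbb{Z}_L$, when it exists, can appear in $D_S(m,m)$ for at most one codeword of the MC-CAC, and since the $L/4$-special in channel~$m$ and any $L/2$-type~(iii) codeword whose repeated channel is~$m$ both occupy this slot, summing over channels gives $n_{L/4}+n_{L/2}^{(iii)}\leq M$, and a similar pigeonhole gives $n_{L/3}\leq M$.

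Combining the two estimates,
\[
12K \;\leq\; \sum_{k=1}^K W(S_k) \;+\; \bigl(6n_{L/3}+3n_{L/4}+n_{L/2}^{(iii)}\bigr) \;\leq\; M(2ML+L-3)+\Delta,
\]
where the maximum deficit $\Delta$ is bounded according to $L\bmod 12$: at most $9M$ when $L\equiv 0,6\pmod{12}$, $6M$ when $L\equiv\pm 3\pmod{12}$, $3M$ when $L\equiv\pm 2,\pm 4\pmod{12}$, and $0$ when $\gcd(L,6)=1$. Dividing by~$12$ and taking floors yields the four inequalities of the theorem.

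The main obstacle is Step~2: for each subtype of type~(i) codeword, one must identify precisely when the six nominal differences $\{\pm a,\pm b,\pm(b-a)\}$ collapse so that $W$ falls below $12$ (only in the equi-difference cases and the $L/4$ degeneracy), and analogously for the $L/2$-degeneracy in type~(iii). These coincidences, combined with the per-channel $L/2$-exclusion that ties the $L/4$- and type~(iii)-specials into the single constraint $n_{L/4}+n_{L/2}^{(iii)}\leq M$, are what make the residue-class structure appear. Once these combinatorial details are in place, the remainder of the bound is routine arithmetic.
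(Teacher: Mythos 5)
Your proposal is correct and is essentially the paper's own argument in a cleaner dress: the weight $W(S)=3\sum_m|D_S(m,m)|+2\sum_{m_1\neq m_2}|D_S(m_1,m_2)|$ reproduces exactly the paper's linear combination (4 times the summed off-diagonal inequalities plus 3 times the summed diagonal ones, yielding the coefficients 12, 14, 12 on the three codeword types), and your deficit counts $n_{L/3}$, $n_{L/4}$, $n^{(iii)}_{L/2}$ with the per-channel exclusions correspond precisely to the paper's indicators $E_\alpha$, $F_\alpha$, $H_\alpha$ and the constraint $F_\alpha+G_\alpha+H_\alpha\leq 1$. The per-codeword weight/deficit formulation merely repackages the paper's case analysis on $L\bmod 4$, so no substantive difference and no gap.
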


\begin{proof}
Let $\mathcal{C}$ be an $(M,L,3)$ MC-CAC with $M\geq 3$. 
We first define some variables. For $i=1,2,\ldots, M$, we let $N_i$ be the number of codewords in $\mathcal{C}$ whose scheduling patterns are in the form
$\{(i,t_1), (i,t_2), (i, t_3)\}$,
for some $t_1, t_2, t_3\in \mathbb{Z}_L$. 

For any two distinct channel indices $i$ and $j$, we let $N_{i,j}$ be the number of codewords of type (iii) in $\mathcal{C}$, whose scheduling patterns are in the form
$\{ (i,t_1), (i,t_2), (j,t_3)\}$, for some $t_1, t_2, t_3\in \mathbb{Z}_L$. 

For any three distinct channel indices $i$, $j$ and $k$, we let $N_{\{i,j,k\}}$ be the number of codewords of type (ii) in $\mathcal{C}$ whose scheduling patterns are in the form
$\{ (i,t_1), (j,t_2), (k,t_3) \}$. 
The total number of codewords in $\mathcal{C}$ is thus equal to
\begin{equation}
N = \sum_{i\in\mathcal{M}} N_i + \sum_{\substack{i,j\in\mathcal{M} \\ i\neq j}} N_{i,j} + \sum_{\substack{S \subset \mathcal{M} \\ |S|=3}} N_{S}.
\label{eq:N}
\end{equation}
The second summation in the above line is extended over all ordered pairs $(i,j)$ in $\mathcal{M}\times \mathcal{M}$ with distinct components. In the third summation in~\eqref{eq:N}, the set $S$ runs through all subsets of $\mathcal{M}$ with cardinality~3.

Consider two channel indices $\alpha$, $\beta \in \mathcal{M}$. If $\alpha\neq \beta$, the sets of differences $D_S(\alpha,\beta)$, with $S$ going through all scheduling patterns in $\mathcal{C}$, must be disjoint. Hence, we have
\begin{equation}
 2N_{\alpha,\beta} + 2N_{\beta,\alpha}+ \sum_{\substack{A \supset \{ \alpha, \beta\} \\ |A|=3}} N_{A} \leq |\mathbb{Z}_L| = L.
\label{eq:LP1}
\end{equation}
The summation in \eqref{eq:LP1} is over all subsets $A$ in $\mathcal{M}$ with cardinality~3 that contain both channel indices $\alpha$ and $\beta$.  We note that if we fix a subset $A' \subset \mathcal{M}$ with cardinality~3, and let $\alpha$ and $\beta$ running through all pairs of distinct channel indices, the variable $N_{A'}$ appears precisely 3 times among the inequalities in \eqref{eq:LP1}. By summing \eqref{eq:LP1} over all sets of distinct channel indices $\{\alpha, \beta\}$, we obtain
\begin{equation}
2 \sum_{(i,j)\in\mathcal{M}^2} N_{i,j} + 3 \sum_{A \in \binom{\mathcal{M}}{3}} N_A \leq \binom{M}{2} L.
\label{eq:LP1a}
\end{equation}

Next consider $D_S(\alpha,\alpha)$ for $\alpha\in\mathcal{M}$. Note that $D_S(\alpha,\alpha)$ is non-empty only when $S$ is a codeword of type (i) or (iii). As the sets of differences $D_S(\alpha,\alpha)$ are mutually disjoint over all $S\in\mathcal{C}$, with the union being a subset of $\mathbb{Z}\setminus\{0\}$, we have
\begin{equation}
\sum_{S\in \mathcal{C}} |D_S(\alpha,\alpha)| \leq L-1, \qquad \text{ for } \alpha\in\mathcal{M}.
\label{eq:LP1b}
\end{equation}

Without loss of generality, we put all scheduling patterns in the form $\{(i_1,0), (i_2,a), (i_3,b)\}$. Since, by~\eqref{eq:same_channel}, the equi-difference codeword with generator $L/3$ and $L/4$ has relatively small set of differences, we define the following indicator functions. For $\alpha\in\mathcal{M}$, let 
\begin{align*}
E_\alpha &:= \begin{cases}
1 & \text{if } \{(\alpha,0), (\alpha,L/3), (\alpha,2L/3)\} \in \mathcal{C}, \\
0 & \text{otherwise},
\end{cases} \\
\intertext{and}
F_\alpha &:= \begin{cases}
1 & \text{if } \{(\alpha,0), (\alpha,L/4), (\alpha,L/2)\} \in \mathcal{C}, \\
0 & \text{otherwise}.
\end{cases}
\end{align*}
We further define, for $\alpha\in\mathcal{M}$, that
$$
G_\alpha := 
\begin{cases}
1 & \text{ if } \{(\alpha,0), (\alpha,a), (\alpha,L/2)\}\in\mathcal{C} \\
  & \qquad \text{ for some } a\neq L/4, \\
0& \text{ otherwise}.
\end{cases}
$$
The set of differences associated to the codeword in the definition of $G_\alpha$ has size~5 (see \eqref{eq:same_channel}).

Similarly, we assume all type (iii) scheduling patterns are in the form $S=\{(\alpha,0), (\alpha, a), (\alpha',b)\}$, with $\alpha\neq\alpha'$. From~\eqref{eq:type_iii}, we see that the codeword $S=\{(\alpha,0), (\alpha, L/2), (\alpha',b)\}$ contributes a size of 1 on the set of differences $D_S(\alpha,\alpha)$. Define
$$
H_\alpha := \begin{cases}
1 & \text{ if } \{ (\alpha,0), (\alpha,L/2), (\alpha',b)\} \in\mathcal{C} \\
  & \qquad \text{ for some } \alpha\neq\alpha' , \\
0 & \text{ otherwise}.
\end{cases}
$$
Note that $F_\alpha + G_\alpha+H_\alpha \leq 1$, since each of these kinds of  scheduling patterns contains $L/2$ in their sets of differences $D_S(\alpha,\alpha)$. At most one of them can be included in a multi-channel CAC.

The rest of the proof is divided into three cases according to the residue of $L$ modulo~4.
We will use the short-hand notation $\mathbf{1}_{\{P\}}$, that is defined to be~1 if $P$ is a true statement, and 0 otherwise.

{\em Case 1, $L \equiv 1 \bmod 2$.} In this case, we have $F_\alpha=G_\alpha=H_\alpha=0$ for all channel indices $\alpha$ in $\mathcal{M}$. When $S$ is of type (i), by~\eqref{eq:same_channel}, $|D_S(\alpha,\alpha)|=2$ if $S=\{(\alpha,0), (\alpha,L/3), (\alpha, 2L/3)\}$ and 
$|D_S(\alpha,\alpha)|\geq 4$ otherwise. When $S$ is of type (iii), by~\eqref{eq:type_iii}, $|D_S(\alpha,\alpha)|=2$. By going over all codewords in $\mathcal{C}$, it follows from~\eqref{eq:LP1b} that
$$
4(N_\alpha -E_\alpha) + 2E_\alpha + 2 \sum_{\gamma\in\mathcal{M}\setminus\{\alpha\}} N_{\alpha,\gamma} \leq L-1.
$$
Using the fact that $E_\alpha=1$  only when $L$ is divisible by 3, i.e., $E_\alpha \leq \mathbf{1}_{\{L\in3\mathbb{Z}\}}$, we have
$$
4N_\alpha + 2 \sum_{\gamma\in\mathcal{M}\setminus\{\alpha\}} N_{\alpha,\gamma} \leq L-1 + 2 \mathbf{1}_{\{L\in3\mathbb{Z}\}}.
$$
By going through all channel indices $\alpha$, we obtain
\begin{equation}
4 \sum_{i\in\mathcal{M}} N_i + 2 \sum_{(i,j)\in\mathcal{M}^2} N_{i,j} \leq M(L-1 + 2 \mathbf{1}_{\{L\in3\mathbb{Z}\}}).
\label{eq:LP1c}
\end{equation}

We next multiply the inequality in~\eqref{eq:LP1a} by 4, multiply the inequality in~\eqref{eq:LP1c} by 3, and sum the resulting inequalities. This yields
\begin{align*}
12 \sum_{i\in\mathcal{M}} N_i + &
14 \cdot\sum_{\substack{i,j\in \mathcal{M} \\ i\neq j}} N_{i,j} +
12 \sum_{\substack{A\subseteq \mathcal{M} \\ |A|=3}} N_A \\
& \leq M(2ML+L-3+6\mathbf{1}_{L\in 3\mathbb{Z}}).
\end{align*}
The summation in the middle is multiplied by 14. If we replace 14 by 12, by~\eqref{eq:N}, the left-hand side in the above inequality is the same as $12N$. We then have
\begin{equation}
 N  \leq \frac{M}{12}(2ML +L -3 + 6 \mathbf{1}_{\{L\in 3\mathbb{Z}\}}).
\label{eq:bound_odd}
\end{equation}

\smallskip

{\em Case 2, $L \equiv 2 \bmod 4$.} In this case we have $F_\alpha=0$ since $L/4$ is not an integer. Notice that a codeword can be $\{(\alpha,0), (\alpha,L/3), (\alpha,2L/3)\}$ or one of the following two kinds of codewords: $\{(\alpha,0), (\alpha,a), (\alpha,L/2) \}$ and $\{(\alpha,0), (\alpha,L/2), (\alpha',a)\}$ for some $\alpha'\neq \alpha$. By the definition of $E_\alpha$, $G_\alpha$ and $H_\alpha$, we can conclude from \eqref{eq:same_channel} and \eqref{eq:type_iii} that, when a codeword $S$ is of type (i), we have
$$
|D_S(\alpha,\alpha)| \begin{cases}
=2 & \text{ if } E_\alpha=1,\\
=5 & \text{ if } G_\alpha=1,\\
\geq 4 & \text{ otherwise},
\end{cases}
$$
and when $S$ is of type (iii), we have
$$
|D_S(\alpha,\alpha)| \begin{cases}
=1 & \text{ if } H_\alpha=1,\\
=2 & \text{ otherwise.}
\end{cases}
$$
By summing the inequality in~\eqref{eq:LP1b} over all codewords in $\mathcal{C}$, we obtain
\begin{align*}
&4  (N_\alpha-E_\alpha-G_\alpha) + 2E_\alpha + 5G_\alpha \\
&\qquad  + 2 \Big( \sum_{\gamma\in\mathcal{M}\setminus\{\alpha\}} N_{\alpha,\gamma} - H_\alpha \Big)+H_\alpha \leq  L-1.
\end{align*}
Using the fact that $E_\alpha \leq \mathbf{1}_{\{L\in3\mathbb{Z}\}}$, $G_\alpha \geq 0$ and $H_\alpha \leq 1$, we can simplify the above inequality to
$$
4N_\alpha + 2 \sum_{\gamma\in\mathcal{M}\setminus\{\alpha\}} N_{\alpha,\gamma}
\leq L + 2 \mathbf{1}_{\{L\in3\mathbb{Z}\}}.
$$
The sum of the above inequality over all channel indices $\alpha\in\mathcal{M}$ is
\begin{equation}
4\sum_{i\in\mathcal{M}} N_i + 2 \sum_{(i,j)\in\mathcal{M}^2} N_{i,j}
\leq M(L + 2 \mathbf{1}_{\{L\in3\mathbb{Z}\}}).
\label{eq:LP1d}
\end{equation}

By the same argument as in Case 1, we add 4 times the inequalities in \eqref{eq:LP1a} and 3 times the inequality in \eqref{eq:LP1b} to get
\begin{align*}
& \phantom{=} 12 \sum_{i\in\mathcal{M}} N_i + 14 \sum_{(i,j)\in\mathcal{M}^2} N_{i,j}
+ 12 \sum_{A\in\binom{\mathcal{M}}{3}} N_A \\
&\leq M(2ML + L + 6\mathbf{1}_{\{L\in3\mathbb{Z}\}}).
\end{align*}
After decreasing the multiplicative factor 14 on the left-hand side to 12, we can simplify the inequality to
\begin{align}
 N \leq \frac{M}{12}(2ML +L + 6\mathbf{1}_{\{L\in 3\mathbb{Z}\}} ) .
\label{eq:bound_even1}
\end{align}

\smallskip

{\em Case 3, $L \equiv 0 \bmod 4$.} In this case, $\mathcal{C}$ may contain the codewords whose scheduling patterns listed in the conditions of the definitions of $E_\alpha$, $F_\alpha$, $G_\alpha$ and $H_\alpha$. By a similar argument as in Case 2, we have
\begin{align*}
&4(N_\alpha - E_\alpha - F_\alpha - G_\alpha) + 2E_\alpha + 3F_\alpha+5G_\alpha \\
& \qquad \qquad \quad + 2 \Big( \sum_{\gamma\in\mathcal{M}\setminus\{\alpha\}} N_{\alpha,\gamma} - H_\alpha \Big) + H_\alpha \leq L-1
\end{align*}
for each $\alpha\in\mathcal{M}$. Again, we use the fact that $E_\alpha \leq \mathbf{1}_{\{L\in 3\mathbb{Z}\}}$,  $G_\alpha \geq 0$ and $F_\alpha+H_\alpha\leq 1$, we can simplify the previous inequality to
$$
4N_\alpha +  2\sum_{\gamma\in\mathcal{M}\setminus\{\alpha\}} N_{\alpha,\gamma} 
\leq L + 2 \mathbf{1}_{\{L\in 3\mathbb{Z}\}}.
$$
Sum the above inequality over all $\alpha\in\mathcal{M}$, we obtain an inequality that is identical to~\eqref{eq:LP1d}. We can continue the same argument as in Case 2 and obtain~\eqref{eq:bound_even1}.

Combining \eqref{eq:bound_odd} and~\eqref{eq:bound_even1}, we obtain
$$
 N \leq \frac{M}{12}(2ML +L -3+3\mathbf{1}_{\{L\in 2\mathbb{Z}\}} +6\mathbf{1}_{\{L\in 3\mathbb{Z}\}} ) .
$$
This completes the proof of Theorem~\ref{thm:bound1}.
\end{proof}

Using the bound in the Theorem~\ref{thm:bound1}, we can show that the MC-CAC in Example~1 is optimal.
More precisely, plugging $M=3$ and $L=5$ into Theorem~\ref{thm:bound1}, we obtain
\begin{align*}
A(3,5,3) &\leq \frac{3}{12} (2\cdot 3 \cdot 5 +  5 - 3) = 32/4 = 8.
\end{align*}
This proves that the MC-CAC$(5,3,5)$  in Example~1 is optimal.

\section{An Upper Bound on Code Size for Weight 4}
\label{sec:bound4}

The main theorem in this section provides an upper bound of the size of an MC-CAC$(M,L,4)$ with $M \geq 4$.
The proof technique is similar to that in the previous section.


\begin{theorem} For $M\geq 4$ and positive integer $L$,
\begin{align*}
&A(M,L,4) \leq \\
& \begin{cases}
\big\lfloor \frac{M}{12}(ML+L)\big\rfloor+1  & \text{if } L =0 \bmod 60, \\
\big\lfloor \frac{M}{12}(ML+L+8) \big\rfloor & \text{if } L =\pm12, \pm20,\pm24, 30 \bmod 60, \\
\big\lfloor \frac{M}{12}(ML+L+6) \big\rfloor & \text{if } L =\pm15 \bmod 60, \\
\big\lfloor \frac{M}{12}(ML+L+4) \big\rfloor & \text{if } L =\pm4, \pm6, \pm8 ,\pm10, \\
&\phantom{\text{if } L=} \pm16, \pm18, \pm28 \bmod 60, \\
\big\lfloor \frac{M}{12}(ML+L+2) \big\rfloor & \text{if } L =\pm3, \pm 5, \pm9, \pm21,\\
&\phantom{\text{if } L=} \pm21, \pm25, \pm27 \bmod 60, \\
\big\lfloor \frac{M}{12}(ML+L) \big\rfloor & \text{if } L =\pm2, \pm14, \pm 22,  \pm26  \bmod 60 ,\\
\big\lfloor \frac{M}{12}(ML+L-2) \big\rfloor & \text{if } gcd(L,60)=1.
\end{cases}
\end{align*}
\label{thm:b4}
\end{theorem}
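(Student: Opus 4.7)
The plan is to mirror the linear-programming strategy in the proof of Theorem~\ref{thm:bound1}, now adapted to weight~$4$. The first step is to classify weight-$4$ scheduling patterns by the partition of $4$ describing how the packets distribute across channels: $4$, $3+1$, $2+2$, $2+1+1$, and $1+1+1+1$. A direct enumeration of each type computes $|D_S(i_1,i_2)|$ entry by entry and verifies that, in the generic (no-collision) case, the total $\sum_{i_1,i_2}|D_S(i_1,i_2)|$ is always $12$, split between diagonal and off-diagonal entries as $(12,0)$, $(6,6)$, $(4,8)$, $(2,10)$, and $(0,12)$ respectively. I would introduce counting variables $N_\pi$ indexed by the refined channel-support pattern $\pi$ of the codewords in $\mathcal{C}$, playing the roles that $N_i$, $N_{i,j}$, $N_A$ play in the weight-$3$ proof.

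Next I would sum the disjointness inequalities $\sum_{S\in\mathcal{C}}|D_S(\alpha,\alpha)|\le L-1$ over $\alpha\in\mathcal{M}$ and $\sum_{S\in\mathcal{C}}|D_S(\alpha,\beta)|\le L$ over ordered pairs $\alpha\neq\beta$ to obtain two global linear inequalities in the $N_\pi$. The crux of the proof is then to enumerate the \emph{deficient} scheduling patterns, i.e., those whose total $\sum_{\alpha,\beta}|D_S(\alpha,\beta)|$ drops below $12$ because of arithmetic coincidences. The expected families are: equi-difference quadruples $\{(m,id):0\le i\le 3\}$ in a single channel; the ``subgroup'' pattern $\{(m,0),(m,L/4),(m,L/2),(m,3L/4)\}$ when $4\mid L$; type-$4$ patterns containing a pair at cyclic distance $L/2$; quadruples built from cosets of the $L/3$- or $L/5$-subgroups when $3\mid L$ or $5\mid L$; and type-$(2,2)$ patterns whose two within-channel pairs share a common difference. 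Each such class receives an indicator variable, and when several classes deposit the same forbidden element into the same entry of the difference array they are linked by mutual-exclusion inequalities in the spirit of $F_\alpha+G_\alpha+H_\alpha\le 1$. The appearance of $60=\mathrm{lcm}(2,3,4,5,6,10,12,15,20,30)$ in the case split of the theorem reflects precisely that every deficiency is triggered by divisibility of $L$ by some element of this set.

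Finally I would form a non-negative linear combination of the sharpened per-channel and per-pair inequalities, tuning the multipliers (the analogues of the factors $4$ and $3$ in the weight-$3$ proof) so that the coefficient on every $N_\pi$ on the left-hand side is at least $12$. After bounding each indicator variable by $1$, or by $0$ when the relevant divisibility condition on $L$ fails, the right-hand side simplifies to the claimed expression $\tfrac{M}{12}(ML+L+c)$ with the residue-dependent constant $c$; the floor and the extra $+1$ in the $L\equiv 0 \bmod 60$ case arise from the integrality of $N$ together with the slack left over by the chosen combination.

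The main obstacle will be the sheer bulk of the case analysis: with five types of codewords and with deficiencies keyed to divisibility of $L$ by any of $2,3,4,5,6,10,12,15,20,30$, the enumeration of deficient patterns and the verification of their mutual-exclusion inequalities are substantially heavier than in the weight-$3$ argument. A secondary difficulty is tuning the multipliers so that the minimum left-hand-side coefficient equals $12$ simultaneously across all five generic types---whose diagonal/off-diagonal splits span the full range from $(12,0)$ to $(0,12)$---while keeping the right-hand side as tight as the theorem requires.
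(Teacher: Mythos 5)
Your overall architecture is exactly the paper's: the same five-way classification by the partition of $4$ over channels, the same two families of disjointness inequalities (one per channel for the diagonal entries $D_S(\alpha,\alpha)$, one per unordered pair of channels for the off-diagonal entries), right-hand-side corrections keyed to divisibility of $L$ by $2,3,4,5$, and a linear combination tuned so that every codeword's left-hand coefficient is at least $12$. (The paper takes weight $1$ on each of the $M$ diagonal and $M(M-1)/2$ pair inequalities, gets minimum coefficient $6$, and rescales; that is your combination up to a global factor of $2$.)

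There is, however, one step that would fail as you describe it. You take the ``generic'' type-(i) total to be $12$ and relegate the equi-difference quadruples $\{0,a,2a,3a\}$ (total $6$) to the list of \emph{deficient} patterns, each class ``receiving an indicator variable''. But equi-difference quadruples are not rare coincidences occurring at most once: a single channel can hold up to $(L-1)/6$ of them with pairwise disjoint difference sets, and optimal codes are built almost entirely from such codewords. A $0/1$ indicator (or a mutual-exclusion constraint) cannot absorb a deficiency of $6$ incurred by arbitrarily many codewords, so the per-channel inequality with baseline coefficient $12$ would simply be false. The correct baseline --- as in the paper, and exactly parallel to the weight-$3$ proof where the diagonal coefficient of $N_i$ is $4$ (the equi-difference value), not $6$ --- is to give a type-(i) codeword coefficient $6$ in the diagonal inequality and the three-in-one-channel part of a type-(iii) codeword coefficient $4$, reserving indicator variables only for the genuinely exceptional patterns of diagonal size $3,4,5$ (resp.\ $2,3$): the $L/4$- and $L/5$-subgroup patterns and the patterns whose difference set contains $L/2$, the last two families being mutually exclusive because they share the element $L/2$. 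With those baselines your multipliers (diagonal weight $2$, off-diagonal weight $1$ per ordered pair) do give every codeword coefficient at least $12$ and reproduce the paper's bound $N\le\frac{M}{12}(ML+L-2+2J)$ with $J=\mathbf{1}_{\{2\mid L\}}+2\,\mathbf{1}_{\{3\mid L\}}+2\,\mathbf{1}_{\{4\mid L\}}+2\,\mathbf{1}_{\{5\mid L\}}$, from which the residue-by-residue table follows.
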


In order to derive the bound in Theorem~\ref{thm:b4}, we classify the codewords in an MC-CAC$(M,L,4)$  into five types.

\ifdefined\SingleColumn
\renewcommand{\theenumi}{(\roman{enumi})}
\else
\renewcommand{\theenumi}{(\roman{enumi}}
\fi

\begin{enumerate}
\item All four packets are sent in the same channel.
\item The four packets are transmitted in four distinct channels.
\item One packet is transmitted in a channel, and the other three packets are transmitted in another channel.
\item Two packets are transmitted in a channel, and the other two packets are transmitted in another channel.
\item The packets are transmitted in three channels.
\end{enumerate}

We first analyze the set of differences of codewords of each of the five types before we give the proof of Theorem~\ref{thm:b4}.

\medskip

\noindent \underline{Type (i).} The scheduling pattern $S$ when all four packets are sent in the same channel can be put in the form
$$
S = \{ (m,0), (m, a), (m, b), (m, c)\},
$$
where $a, b, c$ are nonzero and distinct elements in $\mathbb{Z}_L$. We may assume $0<a<b<c$ without loss of generality.
We call this scheduling pattern {\em equi-difference} if 0, $a$, $b$ and $c$ form an arithmetic progression. In this case, we may assume $b=2a$ and $c=3a$, and the set of nonzero differences
$$
D_S(m,m) = \{\pm a, \pm 2a, \pm 3a\}
$$
has size at most~6.

The size of $D_S(m,m)$ can be strictly less than~6. This happens when
the scheduling pattern $S$ is one of the following scheduling patterns:

\smallskip

\noindent (a) $S = \{0, L/4, L/2, 3L/4\}$ when $L$ is divisible by 4;

\smallskip

\noindent (b) $S$ is any subset of $\{0, L/5, 2L/5, 3L/5, 4L/5 \}$ with size 4 when $L$ is divisible by 5;

\smallskip

\noindent (c) $S = \{0, a, L/2, L/2+a\}$ or $S = \{0, a, L/2, L-a\}$ when $L$ is divisible by 2 and $0< a <L/2$.

The sizes of the sets of differences $D_S(m,m)$ in parts (a), (b) and (c) are 3, 4 and 5, respectively. We note that an MC-CAC cannot contain scheduling patterns from both (a) and (c), because the corresponding sets of differences contain $L/2$ as a common element.
In summary, when the scheduling pattern $S$ can be written as in (a), (b) and (c) above, then $|D_S(m,m)|$ is strictly less than $6$; 
otherwise, the size of $D_S(m,m)$ is larger than or equal to~6.

\smallskip

\noindent \underline{Type (ii).} When all four packets are on distinct channels, a scheduling pattern can be written as
$$
S = \{(m_1, t_1), (m_2, t_2), (m_3, t_3), (m_4, t_4) \},
$$
where $m_1$, $m_2$, $m_3$ and $m_4$ are distinct channel indices in $\mathcal{M}$. The size of the sets of differences is equal to
$$
|D_S(i_1,i_2)| = \begin{cases}
1 & \text{ if } i_1 \neq i_2 \text{ and }\{i_1,i_2\} \subset \{m_1,\ldots, m_4\}, \\
0 & \text{ otherwise.}
\end{cases}
$$

\smallskip
\noindent \underline{Type (iii).} Consider a scheduling pattern
$$
S = \{ (m_1,0), (m_2,a), (m_2,b), (m_2,c)\},
$$
where $m_1$ and $m_2$ are two distinct channel indices in $\mathcal{M}$, and $a,b,c$ are three distinct time indices in $\mathbb{Z}_L$.
The analysis of the difference structure of the three packets $(m_2,a)$, $(m_2,b)$ and $(m_2,c)$ in the same channel can be done as in the previous section.
We let $S' = \{(m_2,a), (m_2,b), (m_2,c)\}$.
The size of $D_S(i,j)$ can be determined as
$$
|D_S(i,j)| = \begin{cases}
3 & \text{ if } \{i, j\}=\{m_1,m_2\}, \\
|D_{S'}(i,j)| & \text{ if } i=j=m_2 , \\
0 & \text{ otherwise}.
\end{cases}
$$
From the analysis in the previous section, we know that when $i=j=m_2$, the smallest value of $|D_{S'}(i,j)|$ can be equal to 2, and it  happens only when $L$ is divisible by $3$.
The second smallest possible value of $|D_{S'}(i,j)|$ is 3, and it occurs only when $L$ is divisible by $4$.
When $L$ is neither divisible by 3 nor divisible by 4, the size of $D_{S'}(i,j)$ is larger than or equal to~4.

\smallskip
\noindent \underline{Type (iv).}  A scheduling pattern of type (iv) can be written as
$$
S = \{ (m_1,a), (m_1,b), (m_2,c), (m_2,d)\},
$$
where $m_1$ and $m_2$ are two distinct channel indices, $a$ and $b$ are distinct time indices, and $c$ and $d$ are another two distinct time indices. The size of $D_S(i,j)$ is
$$
|D_S(i,j)| = \begin{cases}
2 & \text{ if } i=j\in\{m_1,m_2\}, \\
2,3,4 & \text{ if } \{i,j\} = \{m_1,m_2\}, \\
0 & \text{ otherwise}.
\end{cases}
$$
When $i\neq j$, the size of $D_S(i,j)$ is equal to 2 when $L$ is even, $a=c$, $b=d$ and $b-a=L/2 \bmod L$. If $a=c$, $b=d$ but $b-a$ is not congruent to $L/2$, then $|D_S(i,j)|=3$.

\smallskip
\underline{Type (v).} Consider a scheduling pattern
$$
S = \{(m_1,a), (m_1,b), (m_2,c), (m_3,d) \},
$$
where $m_1$, $m_2$ and $m_3$ are distinct channel indices, and $c\neq d$. We can determine the size of the set of differences as
\begin{align*}
& |D_S(i,j)| = \\
& \begin{cases}
1 & \text{ if } \{i,j\} = \{m_2,m_3\} , \\
2 & \text{ if } \{i,j\} = \{m_1,m_2\} \text{ or } \{i,j\} = \{m_1,m_3\}, \\
1,2 & \text{ if } i=j=m_1, \\
0 & \text{ otherwise}.
\end{cases}
\end{align*}
We note that the case $|D_S(m_1,m_1)|=1$ happens only when $L$ is even.

\begin{proof}[Proof of Theorem~\ref{thm:b4}]
We adopt the following notation. Given a finite set $A$, we denote the collection of all subsets of $A$ with size $k$ by $\binom{A}{k}$.

Let $\mathcal{C}$ be an MC-CAC$(M,L,4)$ with $M\geq 4$.
For $i\in \mathcal{M}$, let $N_i$ be the number of codewords of type (i) in $\mathcal{C}$.
For $B \in \binom{ \mathcal{M}}{4}$, we let $N_B$ denote the number of codewords of type (ii) in which the packets are sent in the four channels with indices in set~$B$.
For two distinct channel indices $i$ and $j$ in $\mathcal{M}$, we let $N_{i,j}$ be the number of codewords of type (iii) in which 1 packet is in channel $i$ and the other 3 packets are in channel~$j$.
For any subset $A=\{i,j\}$ of $\mathcal{M}$ with size 2, let $N_A$ be the number of codewords of type (iv) in which 2 packets are in channel $i$ and the other 2 packets are in channel~$j$.
Finally, for channel index $i$ in $\mathcal{M}$, and a subset $A=\{j,k\}\subset \mathcal{M}\setminus\{ i\}$ with size 2, we let $N_{i,A}$ be the number of codewords in $\mathcal{C}$ such that 2 packets are sent in channel~$i$, one packet is in channel $j$ and one is in channel~$k$. We can thus decompose the total number of codewords in $\mathcal{C}$ into five parts
\begin{align}
N&= \sum_{i=1}^M N_i + \sum_{B \in \binom{\mathcal{M}}{4}} N_B + \sum_{i\neq j} N_{i,j} + \sum_{A\in\binom{\mathcal{M}}{2}} N_A \notag \\
& \phantom{=}
+\sum_{i=1}^M \sum_{A\in \binom{\mathcal{M}\setminus\{i\}}{2}} N_{i,A}.
\label{eq:N4}
\end{align}

For each $\alpha\in \mathcal{M}$, because $D_S(\alpha,\alpha)$ must be mutually disjoint when $S$ ranges over all codewords in $\mathcal{C}$, we have the following inequality
\begin{align}
&6N_\alpha + 4 \sum_{\ell \neq \alpha} N_{\ell,\alpha} +
2 \sum_{\ell \neq \alpha} N_{\{\alpha,\ell\}} + 2 \sum_{A \in \binom{\mathcal{M}\setminus\{\alpha\}}{2}} N_{\alpha,A} \notag \\
&   \leq L-1 + \mathbf{1}_{\{L\in 2\mathbb{Z}\}} + 2\mathbf{1}_{\{L\in3\mathbb{Z}\}} + 2\mathbf{1}_{\{L\in4\mathbb{Z}\}} + 2\mathbf{1}_{\{L \in5\mathbb{Z}\}}
\label{eq:C}
\end{align}
for each $\alpha\in\mathcal{M}$. There are $M$ such inequalities.

Let $\alpha$ and $\beta$ be two channel indices in $\mathcal{M}$ with $\alpha<\beta$. Because the set of differences $D_S(\alpha,\beta)$ must be mutually disjoint when we consider all scheduling patterns $S$ in $\mathcal{C}$, we have
\begin{align}
& \quad \sum_{\substack{B \in \binom{\mathcal{M}}{4} \\ \{\alpha,\beta\} \subset B}} N_B
+ 3 (N_{\alpha,\beta} + N_{\beta,\alpha}+ N_{\{\alpha,\beta\}}) \notag \\
& +  \sum_{\ell \in \mathcal{M}\setminus\{\alpha,\beta\}}( N_{\ell, \{\alpha,\beta\}}   + 2 N_{\alpha,\{\beta,\ell\}} + 2N_{\beta, \{\alpha,\ell\}})
  \leq L.
\label{eq:D}
\end{align}
Note that in \eqref{eq:D} we only consider $\alpha<\beta$. We will get the same equation if we swap the role of $\alpha$ and $\beta$. As a result, we have $M(M-1)/2$ inequalities in the form of~\eqref{eq:D}.

We sum over the $M+M(M-1)/2$ inequalities to get
\begin{align*}
&6 \sum_{i\in\mathcal{M}} N_i + 6 \sum_{B \in \binom{\mathcal{M}}{4}} N_B
+ 7 \sum_{i\neq j} N_{i,j}
+ 7 \sum_{A \in \binom{\mathcal{M}}{2}} N_{A} \\
& \!+ 7 \sum_{A \in \binom{\mathcal{M}}{2}} \sum_{\ell \in \mathcal{M}\setminus A} N_{\ell,A} \leq M(L-1)  + \frac{M(M-1)L}{2} +MJ,
\end{align*}
where $J$ is defined as
\begin{equation}
J =  \mathbf{1}_{\{L\in 2\mathbb{Z}\}} + 2\mathbf{1}_{\{L\in3\mathbb{Z}\}} + 2\mathbf{1}_{\{L\in4\mathbb{Z}\}} + 2\mathbf{1}_{\{L \in5\mathbb{Z}\}}.
\label{eq:J}
\end{equation}
We note that the coefficients on the left-hand side is either 6 or 7. By replacing 7 by 6, and combining with the expression of $N$ in~\eqref{eq:N4}, we obtain
\begin{align*}
N &\leq \frac{1}{12} [2M(L-1)  + M(M-1)L +2MJ] \\
&= \frac{M}{12} [2(L-1)  + (M-1)L + 2J] \\
&= \big\lfloor \frac{M}{12} [ML + L - 2 + 2J] \big\rfloor.
\end{align*}
The proof is completed by checking that the above inequality is equivalent to the upper bound in the theorem.
\end{proof}

\section{Constructions of Optimal Multichannel CACs}
\label{sec:construction3}

In this section we give some constructions of optimal MC-CACs by combining some existing constructions of CACs designed for the single-channel case and a special class of combinatorial designs.

We first introduce some notation.
A CAC can be viewed as a special case of MC-CAC with $M=1$.
We will denote a CAC with length $L$ and weight $w$ by CAC$(L,w)$.
A scheduling pattern in a CAC$(L,w)$ can be represented as a set of time indices $\{t_1,t_2,\ldots, t_w\}$, where $t_1,\ldots, t_w$ are distinct elements in $\mathbb{Z}_L$.

A codeword in a CAC is called {\em equi-difference} if the elements in the corresponding scheduling pattern form an arithmetic progression in $\mathbb{Z}_L$.
A CAC is said to be {\em equi-difference} if all codewords are equi-difference.
An equi-difference codeword has the favorable property that there are at most $2(w-1)$ nonzero differences among the elements in the scheduling pattern.
The defining property of a CAC says that each element in $\{1,2,\ldots, L-1\}$ is equal to the difference of at most one pair of time indices in any scheduling pattern in the CAC.
An equi-difference CAC is said to be {\em tight} if for each element $\delta$ in $\{1,2,\ldots, L-1\}$, we can find a codeword and two time indices in the associated scheduling pattern such that their difference is equal to~$\delta$.

\medskip

\noindent {\bf Example 2.} The CAC with length $L=13$ consisting of scheduling patterns $\{0,1,2\}$, $\{0,3,6\}$ and $\{0,4,8\}$ is equi-difference and tight.

\medskip

We will need the following construction of CACs due to~\cite{Momihara07}.

\begin{theorem}[\cite{Momihara07}] Let $t$ be an odd prime such that $-1$ and $-3$ are quadratic non-residues mod~$t$. There exists a CAC of length $2t$ of weight 3 containing $(t-1)/2$ codewords.
\label{thm:Momihara}
\end{theorem}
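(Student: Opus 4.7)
The plan is to construct $(t-1)/2$ equi-difference scheduling patterns $\{0,a,2a\}\subset \mathbb{Z}_{2t}$ with pairwise disjoint difference sets $\{\pm a,\pm 2a\}$; by the combinatorial reformulation of CACs (the $M=1$ case), this produces a CAC of length $2t$ and weight~$3$ of the claimed cardinality. The central device is the Chinese Remainder isomorphism $\mathbb{Z}_{2t}\cong \mathbb{Z}_2\times \mathbb{Z}_t$. Writing $a\leftrightarrow (u,v)$, the difference set of $\{0,a,2a\}$ becomes $\{(u,\pm v),(0,\pm 2v)\}$. I will always take $a$ to be \emph{odd}, so that $u=1$; this places $\pm a$ in the odd coset $\{1\}\times \mathbb{Z}_t$ and $\pm 2a$ in the even coset $\{0\}\times \mathbb{Z}_t$, which rules out every cross-parity collision between two codewords by inspection.

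For the $v$-components I take $Q\subset \mathbb{Z}_t^{\ast}$, the set of nonzero quadratic residues modulo~$t$. The hypothesis that $-1$ is a QNR gives $Q\cap(-Q)=\emptyset$ and $Q\cup(-Q)=\mathbb{Z}_t^{\ast}$, so $Q$ is a system of representatives of $\mathbb{Z}_t^{\ast}/\{\pm 1\}$ with $|Q|=(t-1)/2$, exactly the target code size. For each $v\in Q$ let $a_v$ be the unique odd member of $\{v,\,v+t\}\subset \mathbb{Z}_{2t}$, which exists and is unique since $t$ is odd, and form the codeword $\{0,a_v,2a_v\}$.

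The verification of disjointness is then mechanical. Within a single codeword the four differences $(1,\pm v),(0,\pm 2v)$ are distinct because $v$ and $2v$ are nonzero in $\mathbb{Z}_t$. Between codewords indexed by distinct $v,v'\in Q$: an odd-odd clash $(1,\pm v)=(1,\pm v')$ forces $v=\pm v'$, which collapses to $v=v'$ by the transversal property of $Q$; an even-even clash $(0,\pm 2v)=(0,\pm 2v')$ reduces, via invertibility of $2$ in $\mathbb{Z}_t^{\ast}$, to the same condition; and odd-even clashes are excluded by parity. Hence the $(t-1)/2$ codewords have pairwise disjoint difference sets, as required.

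The only conceptually interesting step is therefore the identification of $-1\in N$ as precisely the hypothesis that makes $Q$ a transversal of $\mathbb{Z}_t^{\ast}/\{\pm 1\}$; the remainder is CRT bookkeeping. The auxiliary hypothesis $-3\in N$ does not appear in this particular argument, and I would expect it to enter Momihara's own proof only if a slightly different generator set is used (for example an orbit of the multiplier group $\langle -1,2\rangle$ on $\mathbb{Z}_t^{\ast}$), where a second QNR condition would be needed to keep orbits containing $\pm 3v$ disjoint from those of the chosen representatives.
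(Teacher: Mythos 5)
The paper does not actually prove Theorem~\ref{thm:Momihara}; it is imported verbatim from~\cite{Momihara07} and used as a black box in the Construction of Section~\ref{sec:construction3}. So the relevant question is whether your independent argument is sound, and it is. Your CRT decomposition $\mathbb{Z}_{2t}\cong\mathbb{Z}_2\times\mathbb{Z}_t$ with odd generators $a_v\leftrightarrow(1,v)$ correctly places $\pm a_v$ in the odd coset and $\pm 2a_v$ in the even coset, the four differences of each codeword $\{0,a_v,2a_v\}$ are distinct because $v,2v\neq 0$ in $\mathbb{Z}_t$, and the only possible inter-codeword collisions reduce (using that $2$ is invertible mod $t$) to $v\equiv\pm v'\pmod t$, which the transversal property of $Q$ excludes. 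Since disjointness of the difference sets also rules out two generators yielding cyclically equivalent codewords, you do obtain $(t-1)/2$ genuinely distinct codewords, and this count is optimal because for prime $t>3$ every weight-3 codeword of length $2t$ contributes at least $4$ of the $2t-1$ nonzero differences. Two remarks on the hypotheses. You are right that $-3$ being a quadratic non-residue is never used; it is an artifact of the specific multiplier-orbit construction in~\cite{Momihara07}, not of the existence statement as quoted. In fact your argument uses the quadratic residues only as a transversal of the $\{\pm1\}$-orbits on $\mathbb{Z}_t^{\ast}$, so replacing $Q$ by $\{1,2,\ldots,(t-1)/2\}$ removes the dependence on $-1$ as well and proves the conclusion for every odd prime $t$ (consistent with the complete determination of optimal even-length weight-3 CACs in the Fu--Lin--Mishima line of work cited as~\cite{FLM10}). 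It would strengthen your write-up to state this explicitly rather than leaving the role of $Q$ as a number-theoretic choice; as written, a reader might think the quadratic-residue structure is doing real work when only the transversal property matters.
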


\begin{definition}
\rm
Let $k$ and $L$ be two positive integers. A $k\times L$ matrix $X$, with each entry drawn from $\mathbb{Z}_L$, is called a {\em difference matrix} over $\mathbb{Z}_L$ if all elements in $\mathbb{Z}_L$ appear as the entries of the difference vector of any two distinct rows in~$X$.
\end{definition}

Difference matrix was first introduced by Bose and Bush in the context of orthogonal arrays~\cite{BB52}. A primary example of difference matrix is the mod-$p$ multiplication table for prime~$p$.  For prime number $L$ and positive integer $k\leq L$, we can construct a $k\times L$ difference matrix $X$ over $\mathbb{Z}_L$ by defining the $(i,j)$-entry as $ij \bmod p$, for $i=0,1,\ldots, k-1$ and $j=0,1,\ldots, L-1$. An example of difference matrix that arises from the mod-13 multiplication table is given in the next example.

\medskip

\noindent {\bf Example 3.} The following matrix is a $3\times 13$ difference matrix over $\mathbb{Z}_{13}$.
$$
\begin{array}{|c|c|c|c|c|c|c|c|c|c|c|c|c|} \hline
0&0&0&0&0&0&0&0&0&0&0&0&0 \\ \hline
0&1&2&3&4&5&6&7&8&9&10&11&12 \\ \hline
0&2&4&6&8&10&12&1&3&5&7&9&11 \\ \hline
\end{array}
$$

\medskip

We refer the readers to \cite{Evans02,Ge05} for more details on difference matrices with three rows and four rows.



The following definition is a special case of generalized Bhaskar Rao designs, in which the associated group is a finite cyclic group $\mathbb{Z}_L$~\cite{GBRD3}.

\begin{definition}
\rm
Given positive integers $M$, $L$ and $w$ with $w\leq M$, a {\em generalized Bhasker Rao Design} (GBRD) signed over $\mathbb{Z}_L$ is an $M\times b$ array $A$, where $b$ is an integer equal to $LM(M-1)/(w(w-1))$, satisfying the following three properties:

\ifdefined\SingleColumn
\renewcommand{\theenumi}{(\arabic{enumi})}
\else
\renewcommand{\theenumi}{(\arabic{enumi}}
\fi

\begin{enumerate}
\item each entry in the array $A$ is either empty or an element in $\mathbb{Z}_L$;
\item each column of $A$ contains exactly $w$ non-empty entries;
\item for each pair of distinct rows, say with distinct indices $i_1$ and $i_2$, there are exactly $L$ column indices, say $j_1$, $j_2,\ldots j_L$, such that $A(i_1,j_\ell)$ and $A(i_2,j_\ell)$ are both non-empty, for $\ell=1,2,\ldots, L$, and the differences $A(i_1,j_\ell) - A(i_2,j_\ell) \bmod L$, for $\ell=1,2,\ldots, L$, are equal to $0,1,\ldots, L-1$ after some permutation.
\end{enumerate}
\end{definition}

A GBRD reduces to a difference matrix when $M=w$, i.e., when all entries in the array are nonempty.

\medskip

\noindent {\bf Example 4.} A $4\times 20$ GBRD signed over $\mathbb{Z}_{10}$ is shown in Fig.~\ref{fig:GBRD}.

\medskip

\begin{figure*}[t]
$$
\begin{array}{|c|c|c|c|c|c|c|c|c|c|c|c|c|c|c|c|c|c|c|c|} \hline
0&0&0&0&0&0&0&0&0&0&0&0&0&0&0& & & & &\\ \hline
0&1&2&3&4&5&6&7&8&9& & & & & &0&0&0&0&0\\ \hline
0&2&4&6&8& & & & & &1&3&5&7&9&5&6&7&8&9\\ \hline
 & & & & &5&4&3&2&1&6&7&8&9&0&1&3&5&7&9\\ \hline
\end{array}
$$
\caption{A $4\times 20$ Generalized Bhaskar Rao Design over $\mathbb{Z}_{10}$}
\label{fig:GBRD}
\end{figure*}

Each column of a difference matrix or a GBRD can be regarded as a scheduling pattern in which the packets are transmitted in distinct channels. We illustrate the idea in the following example.

\medskip

\noindent {\bf Example 5.} We can regard the columns of the matrix in Example 3 as the scheduling patterns
$$
\{(0,0), (1, j \bmod 13), (2, 2j \bmod 13)  \}
$$
for $j=0,1,\ldots, 12$. We can combine these 13 codewords with 9 more codewords adopted from Example 2, namely,
\begin{align*}
&\{(i,0), (i,1), (i,2)\}, \\
&\{(i,0), (i,3), (i,6)\}, \text{ and}\\
&\{(i,0), (i,4), (i,8)\}, 
\end{align*}
for $i=0,1,2$, to form an MC-CAC$(3,13,3)$. The total number of codewords is $N=13+ 3\times 3 = 22$. The number of codewords attains the upper bound
$$
\Big\lfloor \frac{3}{12}(2\cdot 3\cdot 13 + 13 - 3)\Big\rfloor = 22
$$
in Theorem~\ref{thm:bound1}. This shows that the MC-CAC $(3,13,3)$ is optimal.

\medskip

The construction in Example 5 can be generalized. The next theorem gives a construction of MC-CACs.

\begin{theorem}\label{thm:optimal}
Let $M$, $L$ and $w$ be positive integers with $L \geq M\geq w$.
If there is an equi-difference and tight CAC$(L,w)$ $\mathcal{C}_0$ of weight $w=3$ (resp. $w=4$), and an $M \times b$ GBRD $X$ over $\mathbb{Z}_L$ in which each column contains exactly $w=3$ (resp. $w=4$) nonempty entries, then we can construct an optimal MC-CAC$(M,L,w)$.
\end{theorem}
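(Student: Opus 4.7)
The plan is to exhibit the MC-CAC explicitly from the two ingredients, verify the conflict-avoidance property using the combinatorial characterization stated just before Definition~\ref{def:array}, and then match the cardinality against the upper bound of Theorem~\ref{thm:bound1} (for $w=3$) or Theorem~\ref{thm:b4} (for $w=4$).

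Write $\mathcal{C}_0 = \{C_1,\ldots,C_n\}$ with $C_r = \{0, a_r, 2a_r, \ldots, (w-1)a_r\}$, and let $X$ be the given $M\times b$ GBRD with $b = LM(M-1)/(w(w-1))$. I would form two families of scheduling patterns. For each column $j$ of $X$, set $S_j = \{(i, X(i,j)) : X(i,j)\ \text{is non-empty}\}$; by the GBRD axioms $|S_j|=w$ and all $w$ packets sit in distinct channels. For each $r$ and each $i\in\mathcal{M}$, set $T_{r,i} = \{(i, j a_r) : 0\le j\le w-1\}$, placing all $w$ packets in channel $i$. The candidate MC-CAC is $\mathcal{S} = \{S_j\}_j \cup \{T_{r,i}\}_{r,i}$, of size $b + Mn$.

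Next I would verify the MC-CAC property by checking pairwise disjointness of difference cells. A $T$-pattern populates only diagonal cells $D(i,i)$ of the difference array while an $S$-pattern populates only off-diagonal cells, so mixed pairs never clash. For two $T$'s: patterns in different channels use different diagonal cells, and within the same channel the CAC property of $\mathcal{C}_0$ forces $D_{C_r}\cap D_{C_{r'}}=\emptyset$. For two $S$'s and a fixed ordered pair $(i_1,i_2)$ with $i_1\neq i_2$, each $D_{S_j}(i_1,i_2)$ is either empty or the singleton $\{X(i_1,j)-X(i_2,j)\bmod L\}$; the GBRD axiom states that as $j$ ranges over the $L$ columns where both rows are non-empty these differences realize every element of $\mathbb{Z}_L$ exactly once, so the singletons are pairwise disjoint.

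For optimality, tightness and the equi-difference structure of $\mathcal{C}_0$ make $\{D_{C_r}(1,1)\}_r$ a partition of $\mathbb{Z}_L\setminus\{0\}$, so $\sum_r |D_{C_r}(1,1)| = L-1$. The possible sizes of $|D_{C_r}(1,1)|$ were tabulated in Sections~\ref{sec:bound3} and~\ref{sec:bound4}: for $w=3$ they take the deficient values $2,3,5$ only when $a_r\in\{L/3,L/4,L/2\}$ and equal $4$ otherwise, and for $w=4$ there is an analogous list capped at $6$. The residue of $L$ modulo $12$ (resp.~$60$) determines which deficient shapes can occur and hence pins down $n$; substituting into $|\mathcal{S}|=b+Mn$ and simplifying reproduces, case by case, the floor expression in the relevant theorem. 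The principal obstacle is precisely this final arithmetic book-keeping: the bounds branch into many residue classes, each demanding its own short computation in which the hypotheses guaranteeing the existence of $\mathcal{C}_0$ and $X$ with the prescribed parameters must all line up.
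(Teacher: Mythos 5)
Your proposal is correct and follows essentially the same route as the paper: the same union of GBRD-column patterns with $M$ channel-replicated copies of the equi-difference CAC, the same disjointness check, and the same optimality argument via tightness and the count of deficient (exceptional) generators, with the paper simply carrying out the residue-class arithmetic you defer in four cases for each weight. One small correction: for $w=3$ an equi-difference codeword cannot have generator $L/2$ (since $2\cdot L/2\equiv 0$), so the size-$5$ case never arises in $\mathcal{C}_0$ and the only deficient generators are $L/3$ and $L/4$.
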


\begin{proof}
The GBRD $X$ has size $M \times LM(M-1)/(w(w-1))$. This gives rise to $LM(M-1)/(w(w-1))$ codewords of type (ii).
We then add codewords of type (i) obtained from $\mathcal{C}_0$.
Recall that an equi-difference codeword of weight 3 (resp. weight 4) in a CAC can be represented by a subset of $\mathbb{Z}_L$ in the form $\{0,a,2a\}$ (resp. $\{0,a,2a,3a\}$). We say that the codeword is generated by $a$, or $a$ is a {\em generator} of the codeword. For each codeword in $\mathcal{C}_0$ with generator $a$, we put $M$ scheduling patterns
$$
\{ (i,0), (i,a), \ldots, (i,(w-1)a)\}
$$
for $i=1,2,\ldots, M$, in the resulting MC-CAC.

In the followings, we say that a codeword in a CAC is {\em exceptional} if there are strictly less than $2(w-1)$ differences in the codeword.

We first consider weight $w=3$. There are at most four differences from an equi-difference codeword $\{0,a,2a\}$, namely $\pm a$ and $\pm 2a$. We have two possible exceptional codewords. The first one is generated by $L/3$ when $L$ is divisible by 3, and the second one is generated by $L/4$ when $L$ is divisible by 4. We distinguish four cases in the followings.

\begin{enumerate}[(a)]
\item Suppose that there is no exceptional codeword in $\mathcal{C}_0$, i.e., each codeword contributes exactly four nonzero differences.
By the tightness assumption, the number of codewords in $\mathcal{C}_0$ is $(L-1)/4$.
Hence $(L-1)$ is divisible by 4 and $\gcd(L,4)=1$. Also, $L$ cannot be divisible by~3. If $L$ is divisible by 3, then a tight equi-difference CAC must contain the exceptional codeword $\{0,L/3, 2L/3\}$, and it violates the assumption that there is no exceptional codeword.
This proves that $\gcd(L,12)=1$ in this case.
By putting $(L-1)/4$ scheduling patterns in each channel, the total number of scheduling patterns is
\begin{align*}
\frac{M(L-1)}{4} + \frac{LM(M-1)}{6} = \frac{M}{12} (2ML + L - 3).
\end{align*}
This attains the upper bound in Theorem~\ref{thm:bound1} when $L$ and 12 are relatively prime.

\item Suppose $L$ is divisible by 3 and $\{0,L/3, 2L/3\}$ is the only exceptional codeword in $\mathcal{C}_0$.
Since $C_0$ is tight, $L-3$ must be divisible by 4, and thus $L = 3 \mod 12$.
The number of scheduling patterns in the resulting MC-CAC is
\begin{align*}
M+ \frac{M(L-3)}{4} + \frac{LM(M-1)}{6} = \frac{M}{12} (2ML + L +3).
\end{align*}
The upper bound in Theorem~\ref{thm:bound1} with $L = 3 \bmod 12$ is met with equality.

\item Suppose $L$ is divisible by 4 and $\{0,L/4, L/2\}$ is the only exceptional codeword in~$\mathcal{C}_0$.
By the argument similar to the previous case, we can show that when $\gcd(L,12)=4$, and there are
$$
 \frac{M}{12}(2ML+L)
$$
scheduling patterns in the resulting MC-CAC, which is optimal by Theorem~\ref{thm:bound1}.

\item Suppose CAC $\mathcal{C}_0$ contains both exceptional codewords $\{0,L/3, 2L/3\}$ and $\{0,L/4, L/2\}$.
Then $L$ is divisible by 12, and we have
$$
 2M + \frac{M(L-6)}{4} + \frac{LM(M-1)}{6} = \frac{M}{12}(2ML+L+6)
$$
scheduling patterns in the resulting MC-CAC.
This is optimal since the quantity attains the upper bound in Theorem~\ref{thm:bound1} when $L =0 \bmod 12$.
\end{enumerate}

We next consider weight $w=4$.
An equi-difference codeword $\{0,a,2a,3a\}$ has at most six differences in the set of differences.
The number of differences is strictly less than six if $a=L/5$ when $L$ is divisible by 5, and when $a=L/4$ when $L$ is divisible by~4.
So, there are at most two exceptional codewords in $\mathcal{C}_0$.
We divide the proof into four cases.

\begin{enumerate}[(a)]
\item Suppose that there is no exceptional codeword in $\mathcal{C}_0$.
We have $(L-1)/6$ codewords in $\mathcal{C}_0$, and thus $L$ is not divisible by 2 or 3.
We also have $L$ not divisible by 5. Otherwise, because $\mathcal{C}_0$ is assumed to be tight, it must contain the codeword $\{0,L/5, 2L/5, 3L/5\}$.
Hence, we have $\gcd(L,60)=1$.
There are precisely $(L-1)/6$ codewords in~$\mathcal{C}_0$.
Therefore, total number of scheduling patterns in the resulting MC-CAC is
\begin{align*}
\frac{M(L-1)}{6} + \frac{LM(M-1)}{12} = \frac{M}{12}(ML+L-2).
\end{align*}
Comparing with the upper bound in Theorem~\ref{thm:b4} when $\gcd(L,60)=1$, we see that the largest value of $N$ is attained.

\item Suppose $L$ is divisible by 4 and $\{0,L/4, L/2, 3L/4\}$ is the only exceptional codeword in~$\mathcal{C}_0$.
Since $\mathcal{C}_0$ is tight, we have $L=6k+4$ for some integer $k$.
In particular, $L$ is not divisible by~3. Furthermore, $L$ cannot be divisible by 5 neither.
Otherwise $\mathcal{C}_0$ would contain the other exceptional codeword $\{0,L/5, 2L/5, 3L/5\}$, contradicting the assumption that $\{0,L/4, L/2, 3L/4\}$ is the only exceptional codeword in~$\mathcal{C}_0$.
The total number of scheduling patterns in the resulting MC-CAC is
$$
M+\frac{M(L-4)}{6} + \frac{LM(M-1)}{12} = \frac{M}{12}(ML+L+4),
$$
which matches the upper bound in Theorem~\ref{thm:b4} when $\gcd(L,60)=4$.

\item Suppose $L$ is divisible by 5 and $\{0,L/5, 2L/5, 3L/5\}$ is the only exceptional codeword in~$\mathcal{C}_0$.
By the assumption that $\mathcal{C}_0$ is tight, $(L-5)/6$ is an integer, and hence $L$ is not divisible by 6.
This yields $\gcd(L,60)=5$.
We have
$$
M+\frac{M(L-5)}{6} + \frac{LM(M-1)}{12} = \frac{M}{12}(ML+L+2)
$$
scheduling patterns in the resulting MC-CAC, and this is optimal by the upper bound in Theorem~\ref{thm:b4} when $\gcd(L,60)=5$.

\item Finally, suppose $L$ is divisible by 20 and the CAC $\mathcal{C}_0$ contains both  $\{0,L/4, L/2, 3L/4\}$ and $\{0, L/5, 2L/5, 3L/5\}$ as codewords.
Then $\mathcal{C}_0$ contains $2+(L-8)/6$ codewords. Since $(L-8)$ is a multiple of 6, $L$ cannot be divisible by~3.
We get $\gcd(L,60)=20$ in this case.
The total number of codewords in the resulting MC-CAC is
$$
2M+\frac{M(L-5)}{6} + \frac{LM(M-1)}{12} = \frac{M}{12}(ML+L+8),
$$
which attains the upper bound in Theorem~\ref{thm:b4} when $\gcd(L,60)=20$.
\end{enumerate}
\end{proof}

For any positive odd integer $n$, let $e_n$ be the smallest exponent $e\geq 1$ so that $2^e\equiv 1$ mod $L$.
The value $e_n$ is called the multiplicative order of 2 mod $n$, see~\cite{OEIS_multiplicative}.
It was shown in~\cite[Theorem 4]{HLS10} that an equi-difference and tight CAC$(L,3)$ exists, whenever each prime factor $p$ of the length $L$ satisfies
\begin{equation}\label{eq:tight_condition}
p\equiv 5 \text{ mod } 8 \text{ or } p\equiv 1 \text{ mod } 8 \text{ and }4|e_p.
\end{equation}
In this case, $\gcd(L,6)=1$.
Moreover, a GBRD over $\mathbb{Z}_L$ exists if $L$ is odd, $M\geq 3$, $w=3$ and $LM(M-1)\equiv 0$ mod $6$~\cite[Theorem 2 and Lemma 20]{GBRD3}.
We observe that when $\gcd(L,6)=1$, the condition $LM(M-1)\equiv 0$ mod $6$ implies $M\equiv 0,1$ mod $3$.
Hence, as an application of Theorem~\ref{thm:optimal}, we can derive optimal CAC$(M,L,3)$s, where $M\equiv 0,1$ mod $3$ and each prime factor $p$ of $L$ satisfies \eqref{eq:tight_condition}.
The first few feasible lengths $L$ are $$5,13,17,29,37,41,53,61,65,85,97,101.$$

In the rest of this section, we give an infinite family of optimal MC-CACs whose codeword lengths are even.

\medskip

\noindent {\bf Construction}. For any positive integer $t$, one can construct a $4\times 4t$  GBRD signed over $\mathbb{Z}_{2t}$ in which every column has exactly 3 nonempty entries~\cite[Example 48]{GBRD3}. This gives the following $4t$ scheduling patterns,
\begin{gather*}
\{(0, 0), (1, j), (2, 2j)\},\
\{(0, 0), (1, t+j), (3, t-j)\}, \\
\{(0, 0), (2, 2j+1), (3, t+j+1)\},\\
\{(1, 0), (2, t+j), (3, 2j+1)\},
\end{gather*}
for $j=0,1,\ldots, t-1$. We take $t$ to be a prime number larger than 3 such that $-1$ and $-3$ are quadratic non-residues mod $t$.
We apply Theorem~\ref{thm:Momihara} to obtain a CAC$(2t,3)$ that contains $(t-1)/2$ codewords. Repeat the CAC 4 times, once for each of the 4 channels, we get $2(t-1)$ codewords of type (i). In total, we have
$ 4t+2(t-1) = 6t-2$ codewords. Note that $t\equiv 1$ or $5 \mod 6$ since it is an odd prime number. This is an optimal MC-CAC$(4,2t,3)$, because it meets the upper bound on the code size in Theorem~\ref{thm:bound1},
$$
A(4,2t,3) \leq \Big\lfloor \frac{4}{12}(2\cdot 4 \cdot (2t) + 2t - 6) \Big\rfloor = \big\lfloor (18t-6)/3 \big\rfloor = 6t-2.
$$

There are infinitely many primes such that $-1$ and $-3$ are both quadratic non-residues (see e.g.~\cite{SW10}).
Hence, the above construction gives an infinite family of optimal MC-CACs of weight $w=3$ for $M=4$ channels.

\medskip

\noindent {\bf Example 6.} We can convert the GBRD  in Example~4 (see Fig.~\ref{fig:GBRD}) to codewords in an MC-CAC$(4,10,3)$. The columns in the GBRD are associated with the following scheduling patterns
\begin{align*}
&\{(0,0), (1,0), (2,0)\},
\{(0,0), (1,1), (2,2)\}, \\
&\{(0,0), (1,2), (2,4)\},
\{(0,0), (1,3), (2,6)\}, \\
&\{(0,0), (1,4), (2,8)\},
\{(0,0), (1,5), (3,5)\}, \\
&\{(0,0), (1,6), (3,4)\},
\{(0,0), (1,7), (3,3)\}, \\
&\{(0,0), (1,8), (3,2)\},
\{(0,0), (1,9), (3,1)\}, \\
&\{(0,0), (2,1), (3,6)\},
\{(0,0), (2,3), (3,7)\}, \\
&\{(0,0), (2,5), (3,8)\},
\{(0,0), (2,7), (3,9)\}, \\
&\{(0,0), (2,9), (3,0)\},
\{(1,0), (2,5), (3,1)\}, \\
&\{(1,0), (2,6), (3,3)\},
\{(1,0), (2,7), (3,5)\}, \\
&\{(1,0), (2,8), (3,7)\},
\{(1,0), (2,9), (3,9)\}.
\end{align*}
They correspond to 20 codewords in the MC-CAC$(4,10,3)$. In channel $i$, we insert two more codewords,
$$
\{(i,0),(i,1),(i,2) \}\ \text{ and }
\{(i,0),(i,3),(i,6) \},
$$
for $i = 0,1,2,3$. We thus obtain an MC-CAC$(4,10,3)$ consisting of 28 codewords. By comparing with the upper bound in Theorem~\ref{thm:bound1},
$$
A(4,10,3) \leq \Big\lfloor \frac{4}{12}(2\cdot 4 \cdot 10 + 10 - 6)\Big\rfloor = \big\lfloor 84/3 \big\rfloor= 28,
$$
we see that this is an optimal MC-CAC.

\section{Concluding Remarks}

We study MC-CACs when several orthogonal channels can be used for transmissions.
By viewing the problem as a combinatorial packing problem, we show that the total number of supported source nodes increases in the order of $M^2L/6 +O(M)$ for weight 3, and in the order of $M^2L/12 +O(M)$ for weight 4, where $M$ denotes the number of channels, and $L$ denotes the codeword length.
We conjecture that the maximal number of codewords in an MC-CAC$(M,L,w)$, for each fixed weight $w$, has order $O(M^2L)$.
Nevertheless, the number of types involved in the analysis is equal to the number of ways we can partition $M$ into non-negative integers, which is known as the partition function $p(M)$~\cite{OEIS_partition}.
The analysis becomes more complicated when $M$ increases.

The construction given in Theorem~\ref{thm:optimal} also works for weight $w \geq 5$, provided that an equi-difference and tight CAC of weight $w$ and a GBRD with weight $w$ are available.
Unfortunately, existing results for CACs of weight $w\geq 5$ and GBRDs with weight $w\geq 5$ are relatively scarce in comparison to those with weights 3 and~4. It is not easy to extend Theorem~\ref{thm:optimal} to weight $w\geq 5$.
Nonetheless, if we can find a CAC, which may be sub-optimal, and a GBRD with weight $w\geq 5$ so that the construction is applicable, we can indeed construct an MC-CAC.
But we cannot guarantee that this MC-CAC is optimal as we do not have an upper bound of code size for weight $w\geq 5$ yet.
The determination of the optimal code size $A(M,L,w)$ for $w\geq 5$ is an interesting direction for further research.

We focus on an uplink scenario in this paper. The idea of using scheduling patterns in communications can also be considered in ad hoc networks.
Nodes in an ad hoc network may want to broadcast data to the other nodes, or want to send different data to different nodes, without any central coordinator.
A design of scheduling patterns for unicast can be found in~\cite{LSW20}.

The multichannel conflict-avoiding codes defined in this paper allows simultaneously transmitting two packets in two different channels. In a more practical setting considered in~\cite{CLW19}, it is assumed that in each time slot each source node can only pick one channel and send one packet in the chosen channel. Hence, the set of differences $D_S(m_1,m_2)$ should not contain the zero element in $\mathbb{Z}_L$ whenever $m_1\neq m_2$. The bounds derived in Section~\ref{sec:bound3} and \ref{sec:bound4} can be easily adapted to this case. For example, for $w=3$, we only need to replace the right-hand side of~\eqref{eq:LP1a} by $\binom{M}{2}(L-1)$ and repeat the derivation. We state the upper bound on code size with this additional and practical assumption as follows. For weight $w=3$, the number of codewords is upper bounded by
$$
N \leq \Big\lfloor \frac{M}{12}((2M+1)(L-1)+3\mathbf{1}_{\{L \in 2 \mathbb{Z}\}}+6\mathbf{1}_{\{L \in 3 \mathbb{Z}\}}) \Big\rfloor.
$$
For weight $w=4$, the number of codewords $N$ is upper bounded by
\begin{align*}
N &= \Big\lfloor \frac{M}{12}\big[(M+1)(L-1) +2J\big]  \Big\rfloor
\end{align*}
where $J$ is defined in \eqref{eq:J}. 
The study of MC-CACs with the restriction of sending at most one packet in a time slot is another meaningful research direction.

\end{document}